\useunder{\uline}{\ul}{}
\newcommand*{\rom}[1]{\expandafter\@slowromancap\romannumeral #1@}
\newcommand*\rfrac[2]{{}^{#1}\!/_{#2}}%running fraction with slash - requires math mode.
\newcommand{\He}{H\!e}
\newtheorem{theorem}{Theorem}
\newtheorem{example}{Example}
\newtheorem{lemma}{Lemma}
\newtheorem{remark}{Remark}
\newtheorem{definition}{Definition}%[section]
\newif\ifComments
\begin{document}

\title{On the Simulation of Polynomial NARMAX Models}
\author{Dhruv~Khandelwal,~Maarten Schoukens,~and~Roland~T\'oth \thanks{This research is supported by the Dutch Organization for Scientific Research (NWO, domain TTW, grant: 13852) which is partly funded by the Ministry of Economic Affairs.}
\thanks{All authors are with the Control Systems group, Department of Electrical Engineering, Eindhoven University of Technology, 5600 MB Eindhoven, The Netherlands
        {\tt\small D.Khandelwal@tue.nl}}}%}% <-this % stops a space
%\thanks{The authors would like to thank TTW, FMTC, ASML, Datastories, and National Instruments for their support.}% <-this % stops a space
%\thanks{J. Doe and J. Doe are with Anonymous University.}% <-this % stops a space
%\thanks{Manuscript received April 19, 2005; revised August 26, 2015.}}

%\markboth{Journal of \LaTeX\ Class Files,~Vol.~14, No.~8, August~2015}%
%{Shell \MakeLowercase{\textit{et al.}}: Bare Demo of IEEEtran.cls for IEEE Journals}
% The only time the second header will appear is for the odd numbered pages
% after the title page when using the twoside option.
\maketitle

\begin{abstract}

In this paper, we show that the common approach for simulation non-linear stochastic models, commonly used in system identification, via setting the noise contributions to zero results in a biased response. We also demonstrate that to achieve unbiased simulation of finite order NARMAX models, in general, we require infinite order simulation models. The main contributions of the paper are two-fold. Firstly, an alternate representation of polynomial NARMAX models, based on Hermite polynomials, is proposed. The proposed representation provides a convenient way to translate a polynomial NARMAX model to a corresponding simulation model by simply setting certain terms to zero. This translation is exact when the simulation model can be written as an NFIR model. Secondly, a parameterized approximation method is proposed to curtail infinite order simulation models to a finite order. The proposed approximation can be viewed as a trade-off between the conventional approach of setting noise contributions to zero and the approach of incorporating the bias introduced by higher-order moments of the noise distribution. Simulation studies are provided to illustrate the utility of the proposed representation and approximation method.

\end{abstract}

% Note that keywords are not normally used for peerreview papers.
%\begin{IEEEkeywords}
%System identification, symbolic regression, evolutionary computations, nonlinear systems, randomized algorithms, sampling-based techniques.
%\end{IEEEkeywords}

\section{Introduction}
\label{sec:intro}

In the field of data-driven modelling, prediction error has been the primary choice of identification criterion for many decades. This choice is well-justified due to its ease of use and the maximum-likelihood interpretation of the one-step ahead predictor with respect to the sum-of-squared error identification criterion \cite{astrom1979maximum}. This interpretation holds true under the assumption that the underlying noise process is zero-mean and that the true data-generating system can indeed be represented by the chosen model class. It has been observed (for example, in \cite{piroddi2003identification,ljung2001estimating}) that the presence of an auto-regressive component in the prediction model can allow a relatively simpler model to achieve optimistic prediction results for an otherwise complex dynamical system. This observation leads to the following question - given a set of data measured from the unknown dynamical system and an identified model, how can one ensure that the model not only predicts the measured data, but also captures the dynamical structure of the true system. This question has been long recognized in literature (c.f. \cite{ljung1995system}). A key challenge that is inherent to this question is that one should be able to assess the quality of the identified model across multiple model classes, for example, to compare the quality of a linear model with that of a non-linear model.

Model validation is an important ingredient of the system identification framework. A common approach in model validation is to order models based on a pre-defined measure of complexity, and then add levels of complexity in a model until the data can no longer \textit{invalidate} the model \cite{ljung1999model}. This approach is in line with the Occam's Razor principle that suggests that one should choose the simplest model that explains the data. However, the approach becomes quite cumbersome and computationally infeasible when identification is done in a black-box setting that spans across multiple model classes. Another common approach to model invalidation is to estimate the \textit{generalization error} of an identified model (i.e. the performance of an identified model on an independent data-set). When a sufficiently large amount of data is available, the user can set aside a part of the data that is not used during model estimation. The independent data-set can then be used to estimate the generalization error of all models identified based on the data used for estimation. This method is called cross-validation \cite{hastie2001model}. When the amount of data is insufficient, information criterion such as AIC (Akaike Information Criterion) or BIC (Bayesian Information Criterion) can be used to estimate the generalization error.

An alternative approach that deals with this challenge is the control of complexity by using regularization. In these methods, the cost function is modified to include a penalty term that controls the complexity of the identified models as part of the optimization problem. However, in a black box setting that spans multiple model classes, finding a suitable regularization penalty is challenging. 

A third alternative, and one that is of interest in this contribution, is to make use of simulation error to judge the quality of an identified model. Simulation models do not make use of the past output measurements, as opposed to prediction models.  As a result, simulation error is typically more sensitive to mismatch between the model and the system structure than prediction error (see \cite{aguirre2010prediction, piroddi2003identification}). Moreover, simulation models can be defined for models belonging to different model classes. Simulation error has been used in the past for non-linear structure selection, for example in \cite{piroddi2003identification}.

The notion of a simulation model can be understood in multiple ways. A simulation model can be viewed as an infinite horizon prediction model. This effectively negates the auto-regressive components of a prediction model. A simulation model can also be interpreted as the ``deterministic response'' of the system. As a result of this interpretation, simulation models are often reported in the literature as prediction models with noise terms set to 0 \cite[ch. 5]{Ljung1999}. In the case of linear systems, it is always possible to lump the noise sources as an additive term on the output of the deterministic response of the system. Hence, the approach of setting noise contributions to 0 can be intuitively linked to the idea of obtaining the ``deterministic response'' of the system. However, in the non-linear setting, neglecting the process noise terms leads to a biased response (see \cite{hagenblad2008maximum,giordano2016consistency}).%However, such a separation of the deterministic process and the noise model cannot be made for non-linear systems. In the non-linear setting, the presence of process noise can contribute to the deterministic response of a system in terms of the higher order moments of the distribution of the noise. The bias introduced by process noise in non-linear systems has also been observed in the literature, for example see \cite{hagenblad2008maximum,giordano2016consistency}.

In this paper, two contributions are made with respect to the computation of simulation models for polynomial NARMAX models. In Section \ref{sec:NARMAXsimple}, a Hermite polynomial-based representation is proposed. The proposed representation provides a convenient way to link prediction models and simulation models, under the assumption that the noise is distributed normally. In Section \ref{sec:NARMAXfull}, it is shown that computing a simulation model for a finite order NARMAX model may lead to an infinite order NIIR model. For such cases, an approximation method is proposed. Numerical illustrations are provided in \ref{sec:NumericalIllustration}.

\section{Problem description}

Consider the discrete-time single-input single-output (SISO) polynomial NARMAX model form:
\begin{multline}
	y_k = f\big( y_{k-1}, \dots, y_{k-n_y}, u_{k-1}, \dots, u_{k-n_u}, \\
	\quad \xi_{k-1}, \dots, \xi_{k-n_\xi} \big) + \xi_k, \label{eq:NARMAX}
\end{multline}
where, $f(.)$ is a polynomial non-linear function, $u_k, y_k \in \mathbb{R}$ are the input and the output of the system at time instant $k$, $\xi$ is a zero-mean white noise process assumed to be independent of $u$, and $n_u,n_y$ and $n_\xi$ are the maximum time lags for the respective signals. In the sequel it is assumed that $f$ in \eqref{eq:NARMAX} is Bounded Input Bounded Output (BIBO) stable with respect to the deterministic input $u$ and that the initial conditions $\{u_k, y_k, \xi_k\}$ for $k\le0$ are 0. Note that \eqref{eq:NARMAX} is not the most general representation of non-linear systems, nonetheless, a large family of systems can be modelled within this class. %The optimal one-step ahead predictor for such a system is given by
%\begin{multline}
%	\hat{y}_{k|k-1} = f(y_{k-1}, \dots, y_{k-n_y}, u_{k-1}, \dots, u_{k-n_u}, \\
%	\quad \varepsilon_{k-1}, \dots, \varepsilon_{k-n_\xi}),
%\end{multline}
%where $\{u_k,y_k; k\in[1,N]\}$ is the sequence of input-output measurements, $\hat{y}_{k|k-1} \in \mathbb{R}$ is the 1-step ahead prediction at time instant $k$, and $\varepsilon_k \coloneqq y_k - \hat{y}_{k \mid k-1}$ is the one-step ahead prediction error at time instant $k$. Hence, the prediction error can be computed recursively starting from the initial conditions. If the initial conditions are not known, they can be set to 0, and assuming that the system is asymptotically stable, the resulting transient error also decays exponentially.

Given a model of form \eqref{eq:NARMAX}, one possible approach to compute the corresponding simulation model is to set the noise contributions to 0. This approach works well when the function $f$ is linear because it is always possible to separate the noise contributions in the function $f$ to form an additive noise model of the form
\begin{multline}
	y_k = g(y_{k-1}, \dots, y_{k-{n_y}}, u_{k}, \dots, u_{k-{n_u}}) + \\
	\quad h(\xi_{k-1}, \dots, \xi_{k-{n_\xi}}) + \xi_k,
\end{multline}
where $g$ and $h$ are linear functions. If a simulation model is described as the deterministic response of the system, then it can be expressed as the conditional expectation of the output $E[y_k]$ taken with respect to the distribution of the noise. Since it is assumed that the noise process is zero-mean, the simulation response $y_{\mathrm{s}}$ can be written as
\begin{equation}
	y_{\mathrm{s},k} \coloneqq E[y_k] = g(y_{\mathrm{s},k-1}, \dots, y_{\mathrm{s},k-{n_y}}, u_{k}, \dots, u_{k-{n_u}}).
\end{equation}
Notice that this is the same as setting all noise contributions in \eqref{eq:NARMAX} to 0. However, when $f$ is a non-linear function, it is not trivial to separate the noise contributions into a separate noise process that appears additively in the output of the system. Furthermore, in order to compute the conditional expectation of \eqref{eq:NARMAX}, one must take the expectation over terms that include polynomial factors of the past outputs $y$ and noise $\xi$. In order to compute such a conditional expectation, one must take in to account the higher order moments of the noise process and the non-linear dependencies of the output terms $y$ on the noise terms $\xi$. As a result, the simple approach of setting the noise terms to 0 results in a biased simulation response.
%
%In fact, it is easy to observe that the noise $\xi$, and its non-linear transformations, are fed back to the system, thereby influencing the ``deterministic response'' of the system. Since the response of the system may depend non-linearly on the stochastic component $\xi$, it becomes necessary to take into account the higher moments of the stochastic process in order to reliably simulate the response of the system.

Continuing the line of reasoning from the linear case, one can represent the simulation response  $y_{\mathrm{s}}$ of \eqref{eq:NARMAX} as:
%\begin{align}
%	y_{\mathrm{s},k} &\coloneqq E[y_k] \nonumber \\
%	 &= \begin{multlined}[t]
%	 	E [ f( y_{k-1}, \dots, y_{k-n_y}, u_{k-1}, \dots u_{k-n_u}, \\
%	 	\quad \xi_{k-1}, \dots, \xi_{k-n_\xi} ) ],
%	 	\end{multlined}
%	 	 \label{eq:NARMAXsim}
%\end{align}
%
\begin{multline}
	y_{\mathrm{s},k} \coloneqq E[y_k] = E [ f( y_{k-1}, \dots, y_{k-n_y}, u_{k-1}, \dots u_{k-n_u}, \\
	 	\quad \xi_{k-1}, \dots, \xi_{k-n_\xi} ) ] \label{eq:NARMAXsim}
\end{multline}
where the conditional expectation is taken with respect to the probability density $p_\xi$ of $\xi$ and the input sequence $\{u_k;k\in [1,N]\}$. The problem can be stated as follows. Given an input-output sequence $\{u_k;k\in [1,N]\}$, and a model of form \eqref{eq:NARMAX}, compute the simulation model, as given in \eqref{eq:NARMAXsim}.

It is important to note that in order to compute simulation models as in \eqref{eq:NARMAXsim}, higher order moments of the noise distribution must be known. Hence, in the sequel, we assume that the noise process $\xi_k$ is i.i.d and is distributed as $\mathcal{N}(0,1)$. Note that a noise process $v_k \sim \mathcal{N}(\mu, \sigma)$, can be equivalently expressed as
	\begin{equation}
		v_k = \sigma \xi_k + \mu, \label{eq:nonStandardGaussian}
	\end{equation}
and hence, can be included in the model formulation in \eqref{eq:NARMAX}.

%=========================================================

\section{Mathematical background}
\label{sec:math}

Under the given assumptions that the noise process is distributed normally, the noise process terms $\xi_k$ that appear in model \eqref{eq:NARMAX} may be transformed to Hermite polynomials \cite[Ch. 18]{olver2010nist} in $\xi_k$. For completeness, a brief overview of the relevant properties of Hermite polynomials is presented.

Let $(\mathbb{R}, \mathcal{B}_\mathbb{R}, p_x)$ be a probability space on the real line $\mathbb{R}$, where $\mathcal{B}_\mathbb{R}$ is the Borel $\sigma$-algebra on $\mathbb{R}$, and $p_x$ is the probability measure of the Gaussian distribution. Consider the Hilbert space $L_2(\mathbb{R}, \mathcal{B}_\mathbb{R}, p_x)$ of random functions $f$ that satisfy,
\begin{equation}
	E\left[f\right] = 0 \ \text{and} \ E\left[f^2\right]< \infty.
\end{equation}
where $E$ is associated with the inner product
\begin{equation}
	\langle f,g \rangle = E\left[f g \right] = \frac{1}{\sqrt{2\pi}}\int_{-\infty}^{\infty} f(x) g(x) e^{-\frac{x^2}{2}} \mathrm{d}x, \label{eq:innerProduct}
\end{equation}
for any $f,g \in L_2(\mathbb{R}, \mathcal{B}_\mathbb{R}, p_x)$.

\begin{definition}
	Hermite polynomials $\He_n(x)$ of degree $n$ are defined as
	\begin{equation}
		\He_n(x) = (-1)^n e^{\frac{x^2}{2}} \frac{\mathrm{d}^n}{\mathrm{d}x^n}e^{-\frac{x^2}{2}}. \label{eq:HermiteDef}
	\end{equation}
\end{definition}
\vspace*{0.15cm}
Hermite polynomials form a closed and complete orthogonal basis in the $L_2$ space with the inner product defined in \eqref{eq:innerProduct} (see \cite{olver2010nist}), i.e., 
\begin{align}
	\langle \He_n,\He_m \rangle &=  \frac{1}{\sqrt{2\pi}} \int_{-\infty}^{\infty} \He_n(x) \He_m(x) e^{-\frac{x^2}{2}} \mathrm{d}x \nonumber \\
	&= n!\delta_{n,m}, \label{eq:innerProdHermite}
\end{align}
where $\delta_{n,m} = 1$ when $n=m$ and $0$ otherwise. Hermite polynomials possess a number of properties that will be used in order to derive simulation models automatically from a given stochastic model of form \eqref{eq:NARMAX}.

Since Hermite polynomials form a complete orthogonal basis in $L^2(\mathbb{R}, \mathcal{B}_\mathbb{R}, p_x)$ with inner product defined in \eqref{eq:innerProduct}, any function $f \in L^2(\mathbb{R}, \mathcal{B}_\mathbb{R}, p_x)$ can be expressed as a convergent linear combination of basis functions

\vspace{-0.2cm}
{\small
\begin{equation}
	f = \sum_{n=0}^\infty \frac{1}{n!} \langle f,\He_n \rangle \He_n.
\end{equation}	 
}
\vspace{-0.2cm}

Given a monomial in $x\in \mathbb{R}$, one can compute an equivalent representation in terms of Hermite polynomials $\He(x)$ using the following inverse relationship

\vspace{-0.2cm}
{\small
\begin{equation}
	x^n = n! \sum_{m=0}^{\lfloor \frac{n}{2} \rfloor} \frac{1}{2^m m! (n-2m)!} \He_{n-2m}(x), \label{eq:hermiteInverse}
\end{equation}
}
\vspace{-0.2cm}

where $\lfloor . \rfloor$ is defined as the \emph{floor} operator, i.e., $\lfloor n \rfloor = q$, where $q \in \mathbb{Z}$ is the largest integer such that $q \leq n$.
Hermite polynomials of normally distributed random variables have useful properties. Let $X$ be a random variable distributed as $\mathcal{N}(0,1)$. It can be shown that

\vspace{-0.2cm}
{\small
\begin{equation}
	E[\He_n(X)] = \begin{cases}
	1 \quad \text{for } n=0,\\
	0 \quad \text{for } n \neq 0.
	\end{cases}
	\label{eq:expectationHermite}
\end{equation}
}
\vspace{-0.3cm}

The following result can be obtained by using \eqref{eq:innerProdHermite}

\vspace{-0.3cm}
{\small
\begin{align}
E[\He_n(X),\He_m(X)] &= \int_{-\infty}^{\infty} \He_n(x) \He_m(x) \frac{1}{\sqrt{2\pi}} e^{-\frac{x^2}{2}} \mathrm{d}x, \nonumber \\
&= \langle \He_n,\He_m \rangle, \nonumber \\
&= n!\delta_{n,m}. \label{eq:HermiteAutoCorr}
\end{align}
}
\vspace{-0.3cm}

It has been shown in \cite{bershad2001analysis} that for $X_1,X_2 \sim \mathcal{N}(0,1)$,
\begin{equation}
	E[\He_n(X_1),\He_m(X_2)] = n!\delta_{n,m} (E[X_1X_2])^n. \label{eq:HermiteCrossCorr}
\end{equation}
In particular, if $X_1,X_2$ are independent, we get

\vspace{-0.3cm}
{\small
\begin{equation}
	E[\He_n(X_1),\He_m(X_2)] = 
	\begin{cases} 
		1 \quad \text{if } n =0, m=0,\\
		0 \quad \text{otherwise.}
	\end{cases}
	\label{eq:HermiteCrossCorrIndependent}
\end{equation}
}
\vspace{-0.5cm}
%=========================================================

\section{Solution approach}

\subsection{Polynomial NARMAX - simplified case}
	\label{sec:NARMAXsimple}
	
	We first consider the simpler problem of computing a simulation model for a sub-class of NARMAX models. Consider models of the form
	\begin{multline}
	y_k = f\left(u_{k-1}, \dots, u_{k-n_u}, \xi_{k-1}, \dots, \xi_{k-n_\xi} \right) + \\
	\quad g(y_{k-1}, \dots, y_{k-n_y}) + \xi_k, \label{eq:NARMAXsimple}
	\end{multline}
	where the function $f$ is \emph{polynomial} function of the regressor variables and $g$ is a \emph{linear} function of the past outputs. This is a specific class of models that can be represented as \eqref{eq:NARMAX}. Recall that $\xi_k \sim \mathcal{N}(0,1)$. Using \eqref{eq:hermiteInverse}, \eqref{eq:NARMAXsimple} can be reformulated by expressing all noise polynomial terms in the model in terms of Hermite polynomials, i.e.,
	
	\vspace*{-0.4cm}
	{\small
	\begin{align}
		y_k &= f\left(u_{k-1}, \dots, u_{k-n_u}, \xi_{k-1}, \dots, \xi_{k-n_\xi} \right) + \nonumber\\
		& \quad \quad \quad  g(y_{k-1}, \dots, y_{k-n_y}) + \xi_k, \nonumber \\
		&= \tilde{f}\big(u_{k-1}, \dots, u_{k-n_u}, \He_0(\xi_{k-1}), \dots, \He_{\bar{d}_1}(\xi_{k-1}), \dots, \nonumber\\
		& \quad \He_0(\xi_{k - n_{\xi} }), \dots, \He_{\bar{d}_{n_\xi}}(\xi_{k-n_\xi}) \big) + g(y_{k-1}, \dots, y_{k-n_y}) \nonumber\\
		&\quad + \xi_k, \label{eq:NARMAXsimpleHermite}
	\end{align}
	}
	\vspace*{-0.4cm}
	
	\noindent where $\bar{d}_i$ is the maximum exponent of the $\xi_{k-i}$ term, and the function $\tilde{f}$ is multi-linear in terms of the Hermite polynomials $\He_n(\cdot)$. The function $\tilde{f}$ will be later derived in Lemma \ref{lem:1}. This results in an equivalent model that can be easily transformed to a simulation model by using \eqref{eq:expectationHermite}, \eqref{eq:HermiteAutoCorr} and \eqref{eq:HermiteCrossCorrIndependent}. This yields a simulation model represented as

	\vspace{-0.4cm}
	{\small
	\begin{align}
		y_{\mathrm{s},k} &= E\big[ f\left(u_{k-1}, \dots, u_{k-n_u}, \xi_{k-1}, \dots, \xi_{k-n_\xi} \right) + \nonumber \\
		& \quad \quad \quad g(y_{k-1}, \dots, y_{k-n_y}) + \xi_k \big], \nonumber\\
		&= E\Big[ \tilde{f}\big(u_{k-1}, \dots, u_{k-n_u}, \He_0(\xi_{k-1}), \dots, \He_{\bar{d}_1}(\xi_{k-1}), \nonumber\\
		& \quad \quad \quad \dots, \He_0(\xi_{k - n_{\xi} }), \dots, \He_{\bar{d}_{n_\xi}}(\xi_{k-n_\xi}) \big) + \nonumber \\
		& \quad \quad \quad g(y_{k-1}, \dots, y_{k-n_y}) \Big], \nonumber\\
		&= f_{\mathrm{s}}(u_{k-1}, \dots, u_{k-n_u}, y_{\mathrm{s},k-1}, \dots, y_{\mathrm{s},k-n_y}). \label{eq:NARMAXsimpleSim}
	\end{align}		
	}
	\vspace{-0.5cm}
	
	It should be noted that an equation of form \eqref{eq:NARMAXsimple} can be easily transformed to the equivalent simulation model as in \eqref{eq:NARMAXsimpleSim} without the use of Hermite polynomials, since the moments of a Gaussian distribution are known. However, the use of Hermite polynomials yields an equivalent prediction model, as in \eqref{eq:NARMAXsimpleHermite}, that offers a convenient representation. It turns out that, in order to obtain a simulation model like \eqref{eq:NARMAXsimpleSim}, one must only ``switch off'' a number of Hermite polynomial terms in \eqref{eq:NARMAXsimpleHermite}, as per Equations \eqref{eq:expectationHermite}, \eqref{eq:HermiteAutoCorr} and \eqref{eq:HermiteCrossCorrIndependent}). An example will be used to illustrate the method to compute the transformation, followed by a general result.
	
	\begin{example}
	\label{ex:NARMAXsim}
		Consider a system described by the following equation
		\begin{equation}
			y_k = f(u,\xi) = u_k + \xi_{k-1}^2 + \xi_k, \label{eq:ex2}
		\end{equation}
		where $\{u_k\}$ and $\{ \xi_k \}$ are i.i.d sequences distibuted as $\mathcal{N}(0,1)$. The given equation can be re-written as follows
		
		\vspace{-0.5cm}
		{\small
		\begin{align}
			y_k &= u_k + 2! \sum_{m=0}^1 \frac{1}{2^m m! (2-2m)!} \He_{2-2m}(\xi_{k-1}) + \xi_k,  \nonumber \tag{using \eqref{eq:hermiteInverse}}\\
			&= u_k + 2\left( \frac{\He_2(\xi_{k-1})}{2} + \frac{\He_0(\xi_{k-1})}{2} \right)  + \xi_k, \nonumber\\
			&= u_k + \He_0(\xi_{k-1}) + \He_2(\xi_{k-1})  + \xi_k, \nonumber \\ %u_k + 2\left( \frac{\xi_k^2 - 1}{2} + \frac{1}{2}\right)
			&= \tilde{f}\left( u_k,\He_0(\xi_{k-1}),\He_2(\xi_{k-1}) \right) + \xi_k \label{eq:ex2HermiteVer}.
		\end{align}
		}		
		
		\vspace{-0.5cm}
		Now, the simulation model can be computed by taking the expectation of \eqref{eq:ex2HermiteVer} with respect to the noise process $\xi$,
		\vspace{-0.4cm}
		
		{\small
		\begin{align}
			y_{\mathrm{s},k} &= E[y_k] = u_k + E[\He_0(\xi_{k-1}) + \He_2(\xi_{k-1})] + E[\xi_k], \nonumber  \\
			&= u_k + 1. \tag{using \eqref{eq:expectationHermite}} \\	
			&= f_{\mathrm{s}}(u_k) \label{eq:ex2SimModel}
		\end{align}
		}
		Observe that if the simulation model was computed by setting the contributions of the noise terms to 0, one would obtain the following simulation model
		\begin{equation}
			y_{\mathrm{s},k}' = u_k. \label{eq:ex1Sim0}
		\end{equation}
		The two simulation models are offset by a scalar factor 1.		
	\end{example}
	 In representation \eqref{eq:NARMAXsimpleHermite} (and \eqref{eq:ex2HermiteVer}), one must only set all non-zero order Hermite polynomials of the noise term to 0 in order to obtain a simulation model as in \eqref{eq:NARMAXsimpleSim} and \eqref{eq:ex2SimModel}.
	
	For a general result, the following equivalent representation of \eqref{eq:NARMAXsimple} will be used
	
	\vspace{-0.25cm}
	{\small
	\begin{equation}
			y_k = \sum_{i=1}^p \left( c_i \prod_{j=1}^{n_u} u_{k-j}^{b_{i,j}} \prod_{q=1}^{n_\xi} \xi_{k-q}^{d_{i,q}} \right) + \sum_{r=1}^{n_y} g_r y_{k-r} + \xi_k, \label{eq:NARMAXsimpleAlt}
	\end{equation}
	}%
	where  $p \in \mathbb{Z}_+$ is the number of monomial terms of $u$ and $\xi$ in the model, $c_i \in \mathbb{R}$ are the co-efficients, $b_{i,j}$ and $d_{i,q}$ are the exponents of the $u_{k-j}$ and $\xi_{k-q}$ factors in the $i^{\mathrm{th}}$ monomial term respectively and $g_r$ is the linear co-efficient of the $r^{\mathrm{th}}$ output term $y_{k-r}$.
	\begin{lemma}
	\label{lem:1}
		For a model of form \eqref{eq:NARMAXsimpleAlt}, and under the assumption that $\xi_k \sim \mathcal{N}(0,1)$ is independent of the input $u_l$ for all $l \in [1,N]$, the simulation model, computed in the sense of \eqref{eq:NARMAXsim}, is given by
		
		\vspace{-0.25cm}
		{\small
		\begin{equation}
			y_{\mathrm{s},k} = \sum_{{i \in P}_e} c_i \prod_{j=1}^{n_u} u_{k-j}^{b_{i,j}}\prod_{q=1}^{n_\xi}(d_{i,q}-1)!!  + \sum_{r=1}^{n_y} g_r y_{s_{k-r}}, \label{eq:resNARMAXsimple}
		\end{equation}
		}
		\vspace{-0.15cm}

		\noindent where $P_e\coloneqq \left\{ i \in [1,p] \mid d_{i,q} \text{ is even }\forall q \in [1,n_\xi] \right\}$ and $(n-1)!! \coloneqq \frac{n!}{\frac{n}{2}!2^{\rfrac{n}{2}} }$.
	\end{lemma}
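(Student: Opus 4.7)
The plan is to take the expectation of \eqref{eq:NARMAXsimpleAlt} term by term, exploit the product structure of each monomial, and reduce the whole calculation to computing the moments $E[\xi^n]$ of a standard normal. First I would apply linearity of expectation to split the right-hand side into three contributions: (i) the polynomial sum in $u$ and $\xi$, (ii) the linear combination of past outputs, and (iii) the innovation $\xi_k$. Contribution (iii) is immediately $0$ since $\xi_k$ is zero-mean. Contribution (ii) is handled by the linearity of $g$: $E\big[\sum_r g_r y_{k-r}\big] = \sum_r g_r E[y_{k-r}] = \sum_r g_r y_{\mathrm{s},k-r}$, using the recursive definition $y_{\mathrm{s},\ell}:=E[y_\ell]$ and the fact that (ii) is additively separated from (i).

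Next I would focus on contribution (i). Because $\xi$ is assumed independent of the (deterministic) input sequence, the $u$-factors pull out of the expectation, leaving
\begin{equation*}
E\!\left[\prod_{j=1}^{n_u} u_{k-j}^{b_{i,j}}\prod_{q=1}^{n_\xi}\xi_{k-q}^{d_{i,q}}\right] = \prod_{j=1}^{n_u} u_{k-j}^{b_{i,j}}\,\prod_{q=1}^{n_\xi} E\!\left[\xi_{k-q}^{d_{i,q}}\right],
\end{equation*}
where the second factorisation uses the i.i.d.\ assumption on $\{\xi_\ell\}$. Everything now reduces to evaluating the scalar moments $E[\xi^n]$ for $\xi\sim\mathcal{N}(0,1)$.

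For this step I would invoke the Hermite machinery developed in Section \ref{sec:math}. Applying the inverse relation \eqref{eq:hermiteInverse} to $\xi^n$ and then \eqref{eq:expectationHermite} kills every term except $\He_0(\xi)$, and the $\He_0$-coefficient is nonzero only when $n$ is even and $m=n/2$, giving
\begin{equation*}
E[\xi^n] = \begin{cases} \dfrac{n!}{2^{n/2}(n/2)!} = (n-1)!! & \text{if } n \text{ is even},\\[2pt] 0 & \text{if } n \text{ is odd}. \end{cases}
\end{equation*}
Substituting back, the product $\prod_q E[\xi_{k-q}^{d_{i,q}}]$ vanishes as soon as even one exponent $d_{i,q}$ is odd, which is precisely the complement of the index set $P_e$. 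For $i\in P_e$ the product equals $\prod_q (d_{i,q}-1)!!$, and combining this with the contributions (ii) and (iii) gives exactly \eqref{eq:resNARMAXsimple}.

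The main subtlety, and the only place where the restricted model structure \eqref{eq:NARMAXsimpleAlt} is essential, is the step $E[g(y_{k-1},\ldots,y_{k-n_y})]=g(y_{\mathrm{s},k-1},\ldots,y_{\mathrm{s},k-n_y})$. This works here only because $g$ is linear and is added to, rather than multiplied with, the $\xi$-polynomial part; otherwise correlations between past outputs $y_{k-r}$ (which themselves depend on past $\xi$) and current noise factors would produce cross-moments that do not factorise, and the recursion $y_{\mathrm{s},k}=E[y_k]$ would not close on itself. This is precisely the obstruction that motivates the more elaborate treatment in Section \ref{sec:NARMAXfull}.
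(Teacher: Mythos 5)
Your proposal is correct and follows essentially the same route as the paper's proof: both reduce the computation to the expectation of the noise monomials via the inverse Hermite relation \eqref{eq:hermiteInverse} and \eqref{eq:expectationHermite}, so that only the $\He_0$ coefficient $(d_{i,q}-1)!!$ survives for even exponents and all terms with an odd exponent (the complement of $P_e$) vanish, while the linear output part passes through by linearity. The only cosmetic difference is that you factorise the expectation across the independent lags $\xi_{k-1},\dots,\xi_{k-n_\xi}$ before invoking the Hermite expansion on each scalar moment, whereas the paper expands the full product in Hermite polynomials first and then takes the expectation; the two orderings are equivalent.
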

	\vspace{0.1cm}
	\begin{proof}
		The proof relies on the use of the proposed Hermite polynomial based representation. Let $\gamma_i \coloneqq  c_i \prod_{j=1}^{n_u} u_{k-j}^{b_{i,j}}$.  Using \eqref{eq:hermiteInverse}, Equation \eqref{eq:NARMAXsimpleAlt} can be written as
		%\newcounter{mytempeqncnt}
%		\begin{figure*}[!t]
%		% ensure that we have normalsize text
%		\normalsize
%		% Store the current equation number.
%		\setcounter{mytempeqncnt}{\value{equation}}
%		% Set the equation number to one less than the one
%		% desired for the first equation here.
%		% The value here will have to changed if equations
%		% are added or removed prior to the place these
%		% equations are referenced in the main text.
%		\setcounter{equation}{25}
%		\begin{equation}
%		\label{eq:lemma1_01}
%		y_k = \sum_{i=1}^{p} \Bigg( \gamma_i \prod_{l=1}^{n_\xi} \Bigg( d_{i,l}! \sum_{m=0}^{\lfloor \frac{d_{i,l}}{2} \rfloor} \frac{1}{2^m m! (d_{i,l}-2m)!} He_{d_{i,l}-2m} (\xi_{k-l}) \Bigg) \Bigg) + \sum_{r=1}^{n_y} g_r y_{k-r}.
%		\end{equation}
%		\begin{equation}
%		\label{eq:lemma1_02}
%		y_k = \sum_{{i \in P}_e} \gamma_i \prod_{l=1}^{n_\xi} \left( d_{i,l}! \sum_{m=0}^{\lfloor \frac{d_{i,l}}{2} \rfloor} \frac{1}{2^m m! (d_{i,l}-2m)!} He_{d_{i,l}-2m} (\xi_{k-l}) \right) + \sum_{{i\in P}_\mathrm{o}} \gamma_i \prod_{l=1}^{n_\xi} \left( d_{i,l}! \sum_{m=0}^{\lfloor \frac{d_{i,l}}{2} \rfloor} \frac{1}{2^m m! (d_{i,l}-2m)!} He_{d_{i,l}-2m} (\xi_{k-l}) \right).
%		\end{equation}
%		% Restore the current equation number.
%		\setcounter{equation}{\value{mytempeqncnt}}
%		% IEEE uses as a separator
%		\hrulefill
%		% The spacer can be tweaked to stop underfull vboxes.
%		\vspace*{4pt}
%		\end{figure*}

		\vspace{-0.4cm}
		{ \small
		\begin{multline}
			y_k = \sum_{i=1}^{p} \Bigg( \gamma_i \prod_{q=1}^{n_\xi} \Bigg( d_{i,q}! \sum_{m=0}^{\lfloor \frac{d_{i,q}}{2} \rfloor} \frac{\He_{d_{i,q}-2m} (\xi_{k-q})}{2^m m! (d_{i,q}-2m)!}  \Bigg) \Bigg) + \\
			\quad  \sum_{r=1}^{n_y} g_r y_{k-r}. \label{eq:proposedRepSim}
		\end{multline}
		}
		\vspace{-0.30cm}
		
		Note that \eqref{eq:proposedRepSim} is of the form \eqref{eq:NARMAXsimpleHermite}. Define partitions $P_e\coloneqq \left\{ i \in [1,p] \mid d_{i,q} \text{ is even } \forall q \in [1,n_\xi] \right\}$ and $P_\mathrm{o} \coloneqq \left\{ i \in [1,p] \mid i \notin P_\mathrm{e} \right\}$. This yields
%		\begin{equation}
%			y_k = \begin{multlined}[t]
%				\sum_{{i \in P_\mathrm{e}}} \gamma_i \prod_{q=1}^{n_\xi} \left( d_{i,q}! \sum_{m=0}^{\lfloor \frac{d_{i,q}}{2} \rfloor} \frac{\He_{d_{i,q}-2m} (\xi_{k-q})}{2^m m! (d_{i,q}-2m)!} \right) + \\
%			 	\sum_{{i\in P_\mathrm{o}}} \gamma_i \prod_{q=1}^{n_\xi} \left( d_{i,q}! \sum_{m=0}^{\lfloor \frac{d_{i,q}}{2} \rfloor} \frac{\He_{d_{i,q}-2m} (\xi_{k-q})}{2^m m! (d_{i,q}-2m)!} \right) +\\
%			 	\sum_{r=1}^{n_y} g_r y_{k-r}.
%			 \end{multlined}
%		\end{equation}
		
		\vspace*{-0.5cm}
		{\small
		\begin{multline}
			y_k = \sum_{{i \in P_\mathrm{e}}} \gamma_i \prod_{q=1}^{n_\xi} \left( d_{i,q}! \sum_{m=0}^{\lfloor \frac{d_{i,q}}{2} \rfloor} \frac{\He_{d_{i,q}-2m} (\xi_{k-q})}{2^m m! (d_{i,q}-2m)!} \right) + \\
			 	\quad \sum_{{i\in P_\mathrm{o}}} \gamma_i \prod_{q=1}^{n_\xi} \left( d_{i,q}! \sum_{m=0}^{\lfloor \frac{d_{i,q}}{2} \rfloor} \frac{\He_{d_{i,q}-2m} (\xi_{k-q})}{2^m m! (d_{i,q}-2m)!} \right) + \\
			 	\quad \sum_{r=1}^{n_y} g_r y_{k-r}.
		\end{multline}
		} 
		Taking the expectation with respect to the noise process $\xi$, and recognizing that the second term on the right hand side drops out (due to \eqref{eq:expectationHermite}), we get the following
		
		{\small
		\begin{align}
			y_{\mathrm{s},k} &= \begin{multlined}[t]
				\sum_{{i \in P}_\mathrm{e}} \gamma_i \prod_{q=1}^{n_\xi} E\Bigg( d_{i,q}! \bigg( \frac{\He_0 (\xi_{k-q})}{2^{\frac{d_{i,q}}{2}} \frac{d_{i,q}}{2}! } + \\
				\sum_{m=0}^{\lfloor \frac{d_{i,q}}{2} -1 \rfloor} \frac{\He_{d_{i,q}-2m} (\xi_{k-q})}{2^m m! (d_{i,q}-2m)!)}  \bigg) \Bigg) +  E\left[\sum_{r=1}^{n_y} g_r y_{k-r}\right],
			\end{multlined} \nonumber\\
			&= \sum_{{i \in P}_\mathrm{e}} \gamma_i \prod_{q=1}^{n_\xi}(d_{i,q}-1)!! + \sum_{r=1}^{n_y} g_r y_{s_{k-r}},
		\end{align}
		}
		
		\noindent which is the desired result.
	\end{proof}
	Notice that in the derivation of \eqref{eq:resNARMAXsimple}, no approximations were made, and hence the result is exact, as per the definition of the simulation model in \eqref{eq:NARMAXsim}. However, this will not be the case for the general polynomial NARMAX model, as will be shown in the next Section.	
	
	%============================================================
	
	\subsection{General polynomial NARMAX models}
	\label{sec:NARMAXfull}
	
	We now consider the full polynomial NARMAX model as shown in \eqref{eq:NARMAX}. The procedure to compute a simulation response of a stochastic model proposed in Section \ref{sec:NARMAXsimple} allows us to compute the expectation of the noise terms in \eqref{eq:NARMAX} systematically. However, the proposed method does not deal with the non-linear dependence of the output signal $y$ on the noise process $\xi$. Although $\{u_k\}$ is a known sequence and $\xi_k$ is distributed normally, the random variable $y_k$ is typically not distributed normally. Hence, model terms involving random variables $y_k$ cannot be equivalently represented in terms of Hermite polynomials of $y_k$ (the inner-product in \eqref{eq:innerProdHermite} no longer corresponds to the expectation with respect to the distribution of $y_k$). Consequently, in order to compute the conditional expectation of \eqref{eq:NARMAX}, one must recursively eliminate all $y_k$ terms, which yields a non-linear infinite impulse response (NIIR) in terms of the input. Computing an NIIR is intractable because next to infinite time lags, it also contains infinite order polynomial exponents. As a result, just like in the LTI case, an approximation must be made in order to keep the simulation model tractable. Several approximation concepts are proposed in the following example.
	
	\begin{example}
	\label{ex:fullNarmax}
	Consider a system governed by the following equation
		\begin{equation}
			y_k = u_k - 0.1 y_{k-1}^2 + \xi_k, \label{eq:ex3model}
		\end{equation}
		where, $u_k \sim \mathcal{N}(\mu_u,\sigma_u)$ and $\xi_k \sim \mathcal{N}(0,1)$. By taking the expectation and recursively replacing the $y_{k-1}$ term in the model, we get the following set of equations
		{\small
		\begin{align*}
			E[y_k] &= E\left[ u_k - 0.1y_{k-1}^2 + \xi_k \right] = u_k - 0.1E[y_{k-1}^2],\\
			&= u_k - 0.1 E\left[ (u_{k-1} - 0.1y_{k-2}^2 + \xi_{k-1})^2 \right],\\
%			&= \begin{multlined}[t]
%				u_k - 0.1E \big[ u_{k-1}^2 + 0.01y_{k-2}^4 + \xi_{k-1}^2 - \\
%				\quad 0.2u_{k-1}y_{k-2}^2 - 0.1y_{k-2}^2 \xi_{k-1} + 2u_{k-1} \xi_{k-1} \big],
%				\end{multlined}\\
			&= \begin{multlined}[t]
				u_k - 0.1u_{k-1}^2 - 0.1\sigma_\xi^2 - 0.001 E[y_{k-2}^4] + \\
				\quad 0.02u_{k-1}E[y_{k-2}^2],
				\end{multlined}\\
			&= \begin{multlined}[t]
				u_k - 0.1u_{k-1}^2 - 0.1 - 0.001 E[y_{k-2}^4] +\\
				\quad 0.02 u_{k-1} E[(u_{k-2} - 0.1y_{k-3}^2 + \xi_{k-2})^2],
				\end{multlined}\\
%			&= \begin{multlined}[t]
%			u_k - 0.1u_{k-1}^2 - 0.1 - 0.001 E[y_{k-2}^4] - \\
%			   \quad 0.02 u_{k-1} E\big[ u_{k-2}^2 + 0.01y_{k-3}^4 + \xi_{k-2}^2 - \\
%			   \quad 0.2u_{k-2} y_{k-3}^2 -0.2 y_{k-3}^2\xi_{k-2} + u_{k-2} \xi_{k-2} \big],
%			\end{multlined}\\
			&= \begin{multlined}[t]
			u_k - 0.1u_{k-1}^2 - 0.1 - 0.001 E[y_{k-2}^4]\\
			\quad  + 0.02 u_{k-1} u_{k-2}^2 + 0.0002 u_{k-1} E[y_{k-3}^4] + \\
			\quad 0.02 u_{k-1} \sigma_\xi^2 - 0.004 u_{k-2}E[y_{k-3}^2],
			\end{multlined}\\
			&= \dots
		\end{align*}
		}
		From these equations, it can be seen that realization of the simulation model results in an NIIR. %Hence, for the sake of tractability, it becomes inevitable to approximate the NIIR with a finite order model. 
		We explore three approximation concepts, the first two are commonly used in the literature, while the third is our proposed approximation.
		\begin{itemize}
			\item[(i)]  Ignore the noise terms in \eqref{eq:NARMAX}. This yields
				\begin{equation}
					y_{\mathrm{s},k}^{(1)} = u_k - 0.1 y_{s_{k-1}}^2, \label{eq:ex3sim1}
				\end{equation}
			\item[(ii)] Truncate the NIIR after a certain number of recursive substitutions of the $y_k$ terms. We get the simulation responses
%			\begin{align}
%				y_{\mathrm{s},k}^{(2)} &=u_k - 0.1 u_{k-1}^2 - 0.1,\label{eq:ex3sim2}\\
%			y_{\mathrm{s},k}^{(3)} &=u_k - 0.1 u_{k-1}^2 - 0.1 + 0.02 u_{k-1} (u_{k-2}^2 + 1),
%			\end{align}
			\begin{equation}
				y_{\mathrm{s},k}^{(2)} =u_k - 0.1 u_{k-1}^2 - 0.1,\label{eq:ex3sim2}
			\end{equation}
			
			\vspace*{-0.6cm}
			\begin{equation}
				y_{\mathrm{s},k}^{(3)} =u_k - 0.1 u_{k-1}^2 - 0.1 + 0.02 u_{k-1} (u_{k-2}^2 + 1) \label{eq:ex3sim3}
			\end{equation}
			for one and two recursive substitutions, respectively, followed by truncation of the NIIR.
			\item[(iii)] After a certain number of recursive substitutions, approximate the tail of the truncated NIIR by the past sample of the simulation response. Applying this concept after one recursive replacement yields
			\begin{equation}
			y_{\mathrm{s},k}^{(4)} = \begin{multlined}[t]
				u_k - 0.1 u_{k-1}^2 - 0.1 - 0.001 (y_{s_{k-2}}^{(4)})^4 + \\
				\quad 0.02 u_{k-1} (y_{s_{k-2}}^{(4)})^2.
				\end{multlined} \label{eq:ex3sim4}
		\end{equation}
		\end{itemize}
		In Section \ref{sec:NumericalIllustration} it will be shown that the simulation model \eqref{eq:ex3sim4} approximates $E[y_k]$ better than the other models, in terms of the RMS error.
	\end{example}
	
	The purpose of approximation concept (iii) is to achieve a compromise between the approximation concept (i) and the NIIR realiztion of the simulation response. Introduce a parameter $l \in \mathbb{Z}$ to denote the number of successive recursions of the past output terms before the remaining NIIR is approximated with the past simulation response. Denote the resulting simulation approximation as $y_{\mathrm{s},k,l}$. Observe that the simulation model in \eqref{eq:ex3sim4} would be labelled as $y_{\mathrm{s},k,1}$ as per the new notation. % The approximation method can then be interpreted as a compromise between the two approaches to computing a simulation model - completely neglecting the noise contributions, and incorporating the contributions of all moments of the noise distribution in the simulation response. 
	The parameter $l$ achieves a trade-off between the NIIR realization and the approximation in \eqref{eq:ex3sim1}. As $l$ approaches $\infty$, the approximation $y_{\mathrm{s},k,l}$ approaches the conditional expectation $E[y_k]$. Furthermore, for $l=0$, the approximation $y_{\mathrm{s},k,0}$ reduces to \eqref{eq:ex3sim1}.
%	\begin{equation}
%		\lim_{l\rightarrow \infty} y_{\mathrm{s},k,l} = E[y_k],
%	\end{equation}
%	and
%	\begin{equation}
%		y_{\mathrm{s},k,0} = f\big( y_{k-1}, \dots, y_{k-n_y}, u_{k-1}, \dots, u_{k-n_u}, 0, \dots, 0 \big),
%	\end{equation}

	We now formalize the proposed approximation method for the polynomial NARMAX model class. Let $U_a^b(k) \coloneqq \{ u_{k-a},  \dots u_{k-b} \}$ be the sequence of delayed inputs with the delays ranging from $a$ to $b$ with $a,b \in \mathbb{N}_0$ and $a<b$. Similarly, define $Y_a^b(k) \coloneqq \{ y_{k-a},  \dots y_{k-b} \}$ and $\Xi_a^b(k) \coloneqq \{ \xi_{k-a},  \dots \xi_{k-b} \}$. Additionally, define a set-valued discrete-time shift operator $U_a^b(k) \diamond 1 \coloneqq U_{a+1}^{b+1}(k) = \{ u_{k-a-1},  \dots u_{k-b-1} \} $. With a slight abuse of notation, the model in \eqref{eq:NARMAX} will be re-written as
	\begin{equation}
		y_k = f\left( U_0^{n_u}, Y_1^{n_y}, \Xi_1^{n_\xi} \right) + \xi_k. \label{eq:NARMAXalt}
	\end{equation}
	The $l$ recursive substitutions can be represented in the following steps
	\begin{align}
		y_k &= f\left( U_0^{n_u}, \{ y_{k-1}, \dots y_{k-{n_y}} \}, \Xi_1^{n_\xi} \right) + \xi_k, \nonumber\\
		&= \begin{multlined}[t]
			f\big( U_0^{n_u}, \{ f(U_0^{n_u} \diamond 1, Y_1^{n_y} \diamond 1, \xi_1^{n_\xi} \diamond 1 ), \dots ,\\
			f(U_0^{n_u} \diamond n_y, Y_1^{n_y} \diamond n_y, \xi_1^{n_\xi} \diamond n_y ) \}, \Xi_1^{n_\xi} \big) + \xi_k,
			\end{multlined} \nonumber\\
		&= f_1 \left( U_0^{n_u + 1}, Y_2^{n_y + 1}, \Xi_1^{n_\xi + 1} \right) + \xi_k, \nonumber\\
		& \quad \vdots \nonumber\\
		&= f_l \left( U_0^{n_u+l}, Y_{l+1}^{n_y + l}, \Xi_1^{n_\xi + l} \right) + \xi_k. \label{eq:lSubstitutions}
	\end{align}
	It should be noted that the functions $f,f_1, \dots f_l$ are not identical. However, they are equivalent, in the sense that for  given sequences $\{u_k; k \in [0,N]\}$ and $\{\xi_k; k \in [0,N]\}$, these functions produce the same output sequence $\{y_k; k \in [\bar{n}, N] \}$, where $\bar{n} = \max(n_u+l, n_y+l, n_\xi+l)$. Moreover, since $f$ is a polynomial function, so are $f_1, \dots, f_l$. The parameterized approximation can now be defined as follows.
	\begin{definition}
		For the polynomial NARMAX model in \eqref{eq:NARMAXalt}, the $l-$\emph{approximate} simulation model is defined as
		\begin{equation}
			y_{\mathrm{s},k,l} \coloneqq E\left[f_l \left( U_0^{n_u+l}(k), Y_{s_{l+1}}^{n_y + l}(k), \Xi_1^{n_\xi + l}(k) \right) + \xi_k \right],
		\end{equation}
		where the expectation is taken with respect to the distribution of $\xi$ and input sequence $\{u_k; k \in [1,N]\}$, and ${Y_\mathrm{s}}_a^b(k) = \{ y_{\mathrm{s},k-a},  \dots y_{\mathrm{s},k-b} \}$.
	\end{definition}
	To compute the $l-$approximate simulation model, the following alternate representaiton of \eqref{eq:lSubstitutions} is used
	\begin{equation}
		y_k = \sum_{i=1}^p \left( c_i \prod_{j=0}^{n_u+l} u_{k-j}^{b_{i,j}} \prod_{r=1}^{n_y} y_{k-l-r}^{a_{i,r}} \prod_{q=1}^{n_\xi + l} \xi_{k-q}^{d_{i,q}} \right) + \xi_k, \label{eq:lSubstitutedAlt}
	\end{equation}
	where $a_{i,r}$ is the exponent of $y_{k-r}$ in the $i^\mathrm{th}$ term.
	\begin{theorem}
		The $l-$approximate simulation model for the polynomial NARMAX model in \eqref{eq:NARMAX}, under the assumption that $\xi$ is independent of $u$ and that $\xi_k \sim \mathcal{N}(0,1)$, is given by
		\vspace*{-0.25cm}
		\begin{equation}
			y_{\mathrm{s},k,l} = \sum_{i \in P_\mathrm{e}} c_i \prod_{j=0}^{n_u+l} u_{k-j}^{b_{i,j}} \prod_{r=1}^{n_y} y_{\mathrm{s},{k-l-r}}^{a_{i,m}} \prod_{q=1}^{n_\xi +l} (d_{i,q} -1)!!.
		\end{equation}
		\vspace*{-0.20cm}
		\label{th:th1}
	\end{theorem}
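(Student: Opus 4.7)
The plan is to follow the structure of the proof of Lemma \ref{lem:1} with modifications that accommodate the additional factors $y_{k-l-r}^{a_{i,r}}$ appearing in the representation \eqref{eq:lSubstitutedAlt}. First I would start from \eqref{eq:lSubstitutedAlt} and apply the approximation step embedded in the definition of the $l-$approximate simulation model: replace each factor $y_{k-l-r}$ by $y_{\mathrm{s},k-l-r}$, which by construction is treated as a deterministic quantity inside the expectation. This leaves $\xi$ as the only source of randomness once the input sequence $\{u_k\}$ is conditioned upon.

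Next I would rewrite each noise monomial $\xi_{k-q}^{d_{i,q}}$ via the inverse relationship \eqref{eq:hermiteInverse}, exactly as in the step leading to \eqref{eq:proposedRepSim}. This expresses the stochastic part of the model as a multilinear combination of Hermite polynomials $\He_n(\xi_{k-q})$ indexed by degrees and time lags. I would then split the index set $[1,p]$ into the partition $P_\mathrm{e}$ (indices for which \emph{every} $d_{i,q}$ is even) and its complement $P_\mathrm{o}$. Since the $\xi_{k-q}$ are i.i.d.\ standard normal and independent of $u$ and of the already-replaced simulation values, the expectation factors across the $n_\xi + l$ noise time lags, so that
\begin{equation*}
E\!\left[\prod_{q=1}^{n_\xi+l} \He_{n_q}(\xi_{k-q})\right] = \prod_{q=1}^{n_\xi+l} E\!\left[\He_{n_q}(\xi_{k-q})\right],
\end{equation*}
and by \eqref{eq:expectationHermite} each factor vanishes unless $n_q=0$. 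For $i \in P_\mathrm{o}$, at least one $d_{i,q}$ is odd and its Hermite expansion contains only odd-degree Hermite polynomials, so no $\He_0$ component exists and the entire term is annihilated. For $i \in P_\mathrm{e}$, only the $\He_0$ component in each noise factor survives; the coefficient of $\He_0(\xi_{k-q})$ in the expansion of $\xi_{k-q}^{d_{i,q}}$ equals $\frac{d_{i,q}!}{2^{d_{i,q}/2}(d_{i,q}/2)!} = (d_{i,q}-1)!!$. Multiplying through, the $u$-factors and the substituted $y_{\mathrm{s},k-l-r}$-factors pass through the expectation, and one recovers the expression claimed in Theorem \ref{th:th1}.

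The main obstacle I expect is not conceptual but one of careful bookkeeping, specifically justifying that the substituted factors $y_{\mathrm{s},k-l-r}$ may indeed be pulled out of the expectation operator. This reduces to noting that, by construction, $y_{\mathrm{s},k-l-r}$ is a deterministic function of the input sequence (and of further-past simulation values), and in particular is independent of the noise samples $\{\xi_{k-q}\}_{q=1}^{n_\xi+l}$ that appear in the current $l-$step unrolled representation, thanks to the i.i.d.\ assumption on $\xi$. Once this independence is established, the remainder of the argument is a direct specialization of the Hermite-orthogonality mechanism already used in Lemma \ref{lem:1}.
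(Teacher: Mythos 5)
Your proposal is correct and follows exactly the route the paper intends: the paper omits the proof of Theorem \ref{th:th1}, stating only that it uses the representation \eqref{eq:lSubstitutedAlt}, the definition of the $l$-approximate simulation model, and the same Hermite-polynomial argument as Lemma \ref{lem:1}, which is precisely what you carry out. Your additional care in justifying that the substituted $y_{\mathrm{s},k-l-r}$ factors are deterministic functions of the input and thus pass through the expectation is a worthwhile detail the paper glosses over, but it does not constitute a different approach.
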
		
	\begin{proof}
		The proof makes use of the representation in \eqref{eq:lSubstitutedAlt}, the definition of $l-$approximate simulation, and the use of Hermite polynomials of the noise process as in Section \ref{sec:NARMAXsimple}. The proof follows the same line of reasoning as in the proof of Lemma \ref{lem:1}. and is thus omitted here.
	\end{proof}
	
	\begin{remark}
		It should be noted that the proof of Lemma \ref{lem:1} critically rests on the assumption that the noise terms in \eqref{eq:NARMAX} are polynomial, and that  $\xi_k \sim \mathcal{N}(0,1)$. While Lemma \ref{lem:1} and Theorem \ref{th:th1} are derived for models of the form \eqref{eq:NARMAXsimpleAlt} and \eqref{eq:lSubstitutedAlt} respectively, the input terms $u$ and output terms $y$ were not explicitly required to be in a polynomial form. Hence, the derived results can easily be extrapolated to compute simulation models from prediction models with arbitrary non-linear functions on the input and output terms, as long as the noise terms appear in a polynomial form.
	\end{remark}
	\begin{remark}
		While we assumed that $\xi_k \sim \mathcal{N}(0,1)$, the results can be extended to a non-standard Gaussian distribution (see \eqref{eq:nonStandardGaussian}). Furthermore, the results can also be exteded to any exponential family distribution by suitably changing the probability measure and the Hilbert space of random functions in Section \ref{sec:math}. 
	\end{remark}
	
%=================================================================

\section{Numerical illustration}
\label{sec:NumericalIllustration}

A simulation study was carried out to verify the simulation models derived in Examples \ref{ex:NARMAXsim} and \ref{ex:fullNarmax}. The system in \eqref{eq:ex2} was simulated with a periodic input $u \sim 3 + \mathcal{N}(0,1)$ with $p=1024$ periods of $N=3000$ samples, and with noise disturbance $\xi \sim \mathcal{N}(0,1)$. Data from the first 5 experiments was discarded to remove any errors due to transients. Assuming ergodicity, the ideal simulation response can be computed empirically as the ensemble average
	
		{\small
		\begin{equation}
			\bar{y}_s = \frac{1}{p-5}\sum_{i=6}^{p} y^{(i)},
		\end{equation}
		}
		
\noindent where $y^{(i)}$ is the output corresponding to the $i^\mathrm{th}$ period of the input. The theoretical simulation of the model is computed using \eqref{eq:ex2SimModel}. The empirical and the theoretical simulation response is plotted in Fig. \ref{fig:ex3Fig1}. Several realizations of the noisy output are plotted in yellow. The unscaled histogram of $\bar{y}_s - y_{\mathrm{s}}$ is depicted in Fig. \ref{fig:ex3Fig2}. The difference $\bar{y}_s - y_{\mathrm{s}}$ appears to be centered at 0. This clearly implies that the simulation model computed by neglecting the noise contributions (see \eqref{eq:ex1Sim0}) would be biased by a scalar factor of 1. The difference between the two simulation models is also quantified in terms of the RMS error in Table \ref{tab:ex2}. The first column of numbers indicate the RMS error between the empirical mean and the proposed simulation models. The second column indicates ensemble average of the RMS errors between the noisy response of the system and the simulation model approximations. It can be observed that the proposed simulation model performs significantly better in approximating the noisy output than the conventional simulation model that sets the noise contributions to 0.

\begin{figure}[t!]
	\centering
	\vspace*{0.25cm}
	\begin{subfigure}[t]{0.45\linewidth}
	\centering
		\includegraphics[scale=0.5]{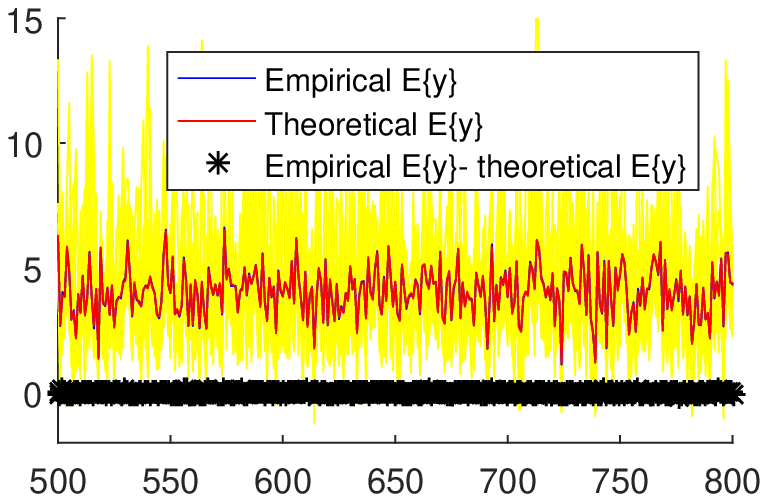}
		\caption{Comparison of $y_{\mathrm{s}}$ and $\bar{y}_s$}%
		\label{fig:ex3Fig1}
	\end{subfigure}%
	~
	\begin{subfigure}[t]{0.45\linewidth}
	\centering
		\includegraphics[scale=0.5]{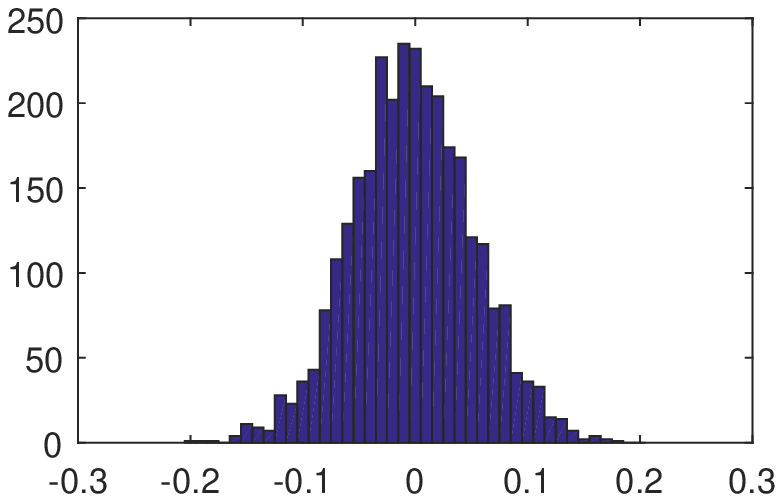}
		\caption{Distribution of $\bar{y}_s - y_{\mathrm{s}}$}
		\label{fig:ex3Fig2}
	\end{subfigure}
	\caption{Example \ref{ex:NARMAXsim} - simulation results.}
\end{figure}
\begin{table}[t!]
	\centering
	\caption{RMS errors of simulation results - Example \ref{ex:NARMAXsim}}
	\label{tab:ex2}
	\begin{tabular}{l|c|c|}
	\cline{2-3}
	\rule{0pt}{8pt}                                       & $\text{rms}(\bar{y}_s-E[y])$ & $\text{avg}_i(\text{rms}(y^{(i)} - E[y]))$ \\ \hline
	\multicolumn{1}{|l|}{$E[y]=y_{\mathrm{s}}$} & 0.0544     &  1.7305          \\ \hline
	\multicolumn{1}{|l|}{$E[y]\approx y_{\mathrm{s}}'$}      & 1.0001     & 1.9976           \\ \hline
	\end{tabular}
	\vspace*{-0.35cm}
\end{table}
A similar simulation study is carried out for the model in Example \ref{ex:fullNarmax} (see Equation \eqref{eq:ex3model}). The model was simulated with periodic input $u_k \sim \mathcal{N}(0,1)$ with $p$ periods of $N$ samples each, and noise $\xi_k \sim \mathcal{N}(0,1)$. Again, the simulation response was computed empirically using $p-5$ periods of excitation. The histograms of the differences between the four simulation models (\ref{eq:ex3sim1}-\ref{eq:ex3sim4}) and the empirical simulation $\bar{y}_s$ are depicted in Fig. \ref{fig:ex6}. Additionally, the RMS errors are presented in Table \ref{tab:ex3}. It can be verified, both visually and numerically, that $y_{\mathrm{s}}^{(4)}$ approximates the simulation of the system better than the other approximations.
\begin{figure}[t!]
		\vspace*{0.1cm}
		\centering
		\begin{subfigure}[t]{0.45\linewidth}
		\centering
			\includegraphics[scale=0.5]{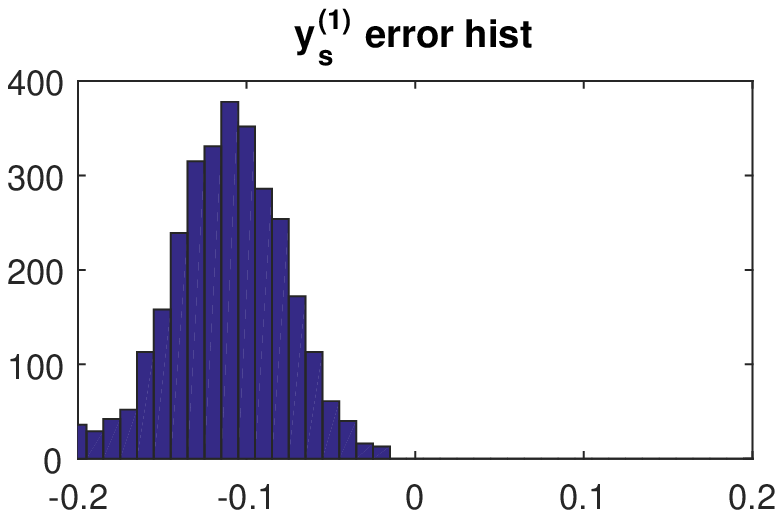}
			\caption{Distribution of $\bar{y}_s - y_{\mathrm{s}}^{(1)}$}
			\label{fig:ex6Fig1}
		\end{subfigure}%
		~
		\begin{subfigure}[t]{0.45\linewidth}
		\centering
			\includegraphics[scale=0.5]{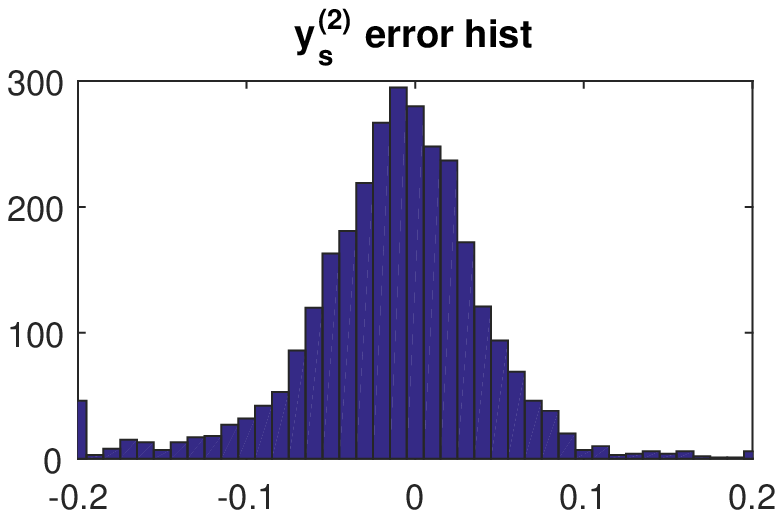}
			\caption{Distribution of $\bar{y}_s - y_{\mathrm{s}}^{(2)}$}
			\label{fig:ex6Fig2}
			\end{subfigure}%
		\\
		\begin{subfigure}[t]{0.45\linewidth}
		\centering
			\includegraphics[scale=0.5]{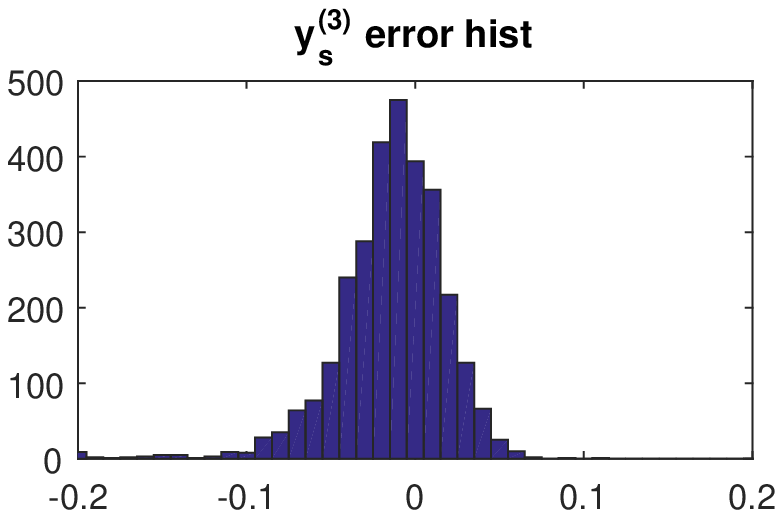}
			\caption{Distribution of $\bar{y}_s - y_{\mathrm{s}}^{(3)}$}
			\label{fig:ex6Fig3}
		\end{subfigure}%
		~
		\begin{subfigure}[t]{0.45\linewidth}
		\centering
			\includegraphics[scale=0.5]{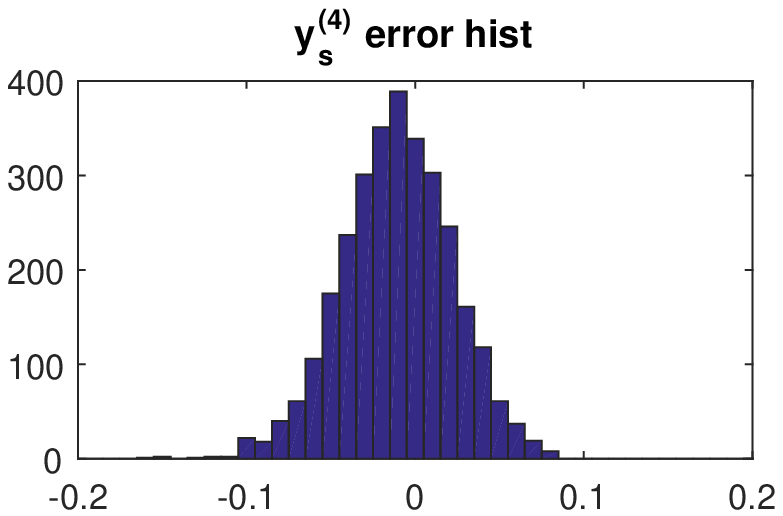}
			\caption{Distribution of $\bar{y}_s - y_{\mathrm{s}}^{(4)}$}
			\label{fig:ex6Fig4}
		\end{subfigure}
		\caption{Example \ref{ex:fullNarmax} - simulation results.}
		\label{fig:ex6}
		\vspace*{-0.7cm}
\end{figure}
%\begin{table}[]
%	\centering
%	\vspace{0.2cm}
%	\caption{Simulation results - example \ref{ex:fullNarmax}}
%	\label{tab:ex3}
%	\begin{tabular}{l|l|l|l|l|}
%	\cline{2-5}
%                                         & $E[y] \approx y_{\mathrm{s}}^{(1)}$ & $E[y] \approx y_{\mathrm{s}}^{(2)}$ & $E[y] \approx y_{\mathrm{s}}^{(3)}$ & $E[y] \approx y_{\mathrm{s}}^{(4)}$ \\ \hline
%\multicolumn{1}{|l|}{$\bar{y}_s - E[y]$} & 0.1165             & 0.0618            & 0.0360            & 0.0348            \\ \hline
%\multicolumn{1}{|l|}{$E[\bar{y}_s-E[y]]$}        & 1.0407              & 1.036            & 1.0348            & 1.0348            \\ \hline            
%	\end{tabular}
%	
%\end{table}
\begin{table}[t!]
	\centering
	\vspace{0.2cm}
	\caption{RMS errors of simulation results - Example \ref{ex:fullNarmax}}
	\label{tab:ex3}
	\begin{tabular}{l|c|c|}
	\cline{2-3}
	\rule{0pt}{8pt}                                       & $\text{rms}(\bar{y}_s-E[y])$ & $\text{avg}_i(\text{rms}(y^{(i)} - E[y]))$ \\ \hline
	\multicolumn{1}{|l|}{$E[y] \approx y_{\mathrm{s}}^{(1)}$} & 0.1165             & 1.0407                \\ \hline
	\multicolumn{1}{|l|}{$E[y] \approx y_{\mathrm{s}}^{(2)}$} & 0.0618             & 1.036                 \\ \hline
	\multicolumn{1}{|l|}{$E[y] \approx y_{\mathrm{s}}^{(3)}$} & 0.0360             & 1.0348                \\ \hline
	\multicolumn{1}{|l|}{$E[y] \approx y_{\mathrm{s}}^{(4)}$} & 0.0348             & 1.0348                \\ \hline
	\end{tabular}
	\vspace*{-0.7cm}
\end{table}

\section{Conclusions}

When computing a simulation model from a prediction model that belongs to the NARMAX class, care must be taken to avoid bias. To that extent, we proposed an alternative representation that allows to compute a simulation model from a polynomial NARMAX model conveniently. In order to curtail infinite order simulation models, we also proposed a parameterized method to approximate the simulation model. The proposed approximation method yields the true, infinite-order simulation model as the approximation parameter goes to $\infty$. Simulation examples were used to demonstrate that the approximation method yields a simulation model with a smaller bias compared to the other common approaches used for computing simulation models.

	%==============================================================

\balance
%\IEEEtriggeratref{47}
%\pagebreak

\bibliographystyle{IEEEtran}

\bibliography{cdc2018_bib}

\end{document}